\documentclass[conference,letterpaper]{IEEEtran}

\hyphenation{evaluate}

\usepackage{amsthm}
\usepackage{amsmath} %
\usepackage{mathtools}
\usepackage{graphicx}
\usepackage{psfrag}
\usepackage{epsf,epsfig}
\usepackage{epstopdf}
\usepackage{subfigure}
\usepackage{cite,color}
\usepackage{cite}
\usepackage{amsmath,amssymb,amsfonts}
\usepackage{algorithmic}
\usepackage{graphicx}
\usepackage{textcomp}
\usepackage{xcolor}
\usepackage{verbatim}
\usepackage{amsmath}

\usepackage{multirow}

\usepackage{tikz}

\definecolor{red}{rgb}{1,0,0}  


\newcommand{\CNcal}{\mathcal{CN}}

\newcommand{\snr }{{\rm snr }}

\usepackage{thmbox}

\newtheorem{Cor}{Corollary}

\newtheorem{Prop}{Proposition}

\begin{document}
\title{Closed-Loop Binary Media-Based Modulation \vspace{-.05cm}}

\author{

\vspace{-.000cm}

\IEEEauthorblockN{ Majid Nasiri Khormuji and Branislav M. Popovic} \\
\vspace{-.5cm}
\IEEEauthorblockA{\textit{Huawei Technologies Sweden AB, Stockholm, Sweden}\\
                           \{majid.nk, branislav.popovic\}@huawei.com
                          \vspace{-.3cm}
                           }
}

\maketitle

\begin{abstract}

Presenting analytical results for Binary Media-Based Modulation (B-MBM) over fading channels for single-antenna receivers. Illustrating that open-loop B-MBM, in the absence of feedback, only achieves a diversity order of one. However, with feedback and optimal weight selection in closed-loop configurations, a diversity order of two becomes achievable. Notably, the closed-loop B-MBM, with analytically computed optimal weights, performs equivalent to Alamouti-coded BPSK transmission, demonstrating feasibility even with just one radio frequency chain when feedback is available.

\end{abstract}

\begin{IEEEkeywords}
Binary Media-Based Modulation, Feedback, Transmit Diversity, and RF Mirrors.
\end{IEEEkeywords}

\section{Background and Outline}
The usage of media to enhance information transmission and reception has garnered significant attention, where an information-carrying signal on its way to an indented
destination is modified to enable a more reliable wireless link.  To achieve this, the components known as
\lq mirrors\rq~are employed, facilitating  advantageous signal  modification. Specifically, the deployment of Radio Frequency (RF) mirrors in close proximity of the transmit antennas or encircling them,  has enabled the technique known as Media-Based Modulation (MBM) \cite{MBM_khandani_ISIT, MBM_khandani_ISIT2, MBM_Seifi, Basar}.
Through utilization of the RF mirrors, the signal emitted by the transmitter undergoes \emph{shaping} prior reaching its  destination. Notably, the MBM holds significant potential for new use cases, with prospects for its integration into upcoming 6G networks\cite{Basar}.



This paper considers Binary MBM, focusing specifically on scenarios where a  two-state mirror is positioned  in proximity to a transmit antenna. This is the lowest order MBM variant serving as an introductory model, aiding in establishment of a fundamental understanding of MBM,  especially when incorporating receiver feedback into the transmission process.   This transmission mode, labeled as \emph{closed-loop binary MBM}, is systematically examined.

The investigation  begins by computing the Bit Error Rate (BER) of the open-loop B-MBM configuration over Rayleigh fading channels. The findings reveal that the open-loop approach fails to yield any appreciable diversity gain when compared to the conventional single-antenna setup employing BPSK transmission. To address this limitation,  a closed-loop B-MBM configuration is set up. The closed-loop B-MBM employs  multiplicative complex weights derived from the channel coefficients. The optimal weights are analytically found by which it is proved that the closed-loop B-MBM achieves  performance on par with BPSK transmission  using the celebrated Alamouti space-time code in \cite{Alamouti} for two transmit antennas. However, a crucial aspect highlighted is that B-MBM effectively utilizes only one RF chain.


The remainder of the paper is organized as follows. Section~\ref{sec:Open_Binary MBM} outlines the baseline B-MBM  scheme and provides its analytical BER expression.    Section~\ref{sec:Close_Binary MBM} presents the closed-loop B-MBM with analytically obtained optimal complex weights.  Section~\ref{sec:performance_CL_BMBM} computes the BER of the optimal closed-loop B-MBM in Section~\ref{sec:Close_Binary MBM}. Section~\ref{sec:Alamouti} discusses the connection of the optimal closed-loop B-MBM to Alamouti-coded BPSK transmission. Section~\ref{sec:Analog_Close_Binary MBM} presents the closed-loop B-MBM with unit-amplitude weights. Section~\ref{sec:Peformance_simulation} presents simulations results and compares them with analytical derivations. Section~\ref{sec:concl} summarizes the major results.



%


\section{Open-Loop Binary MBM}\label{sec:Open_Binary MBM}

\begin{figure}[t]
\centering
\vspace{-0.0cm}
	\includegraphics[width=.45\textwidth]{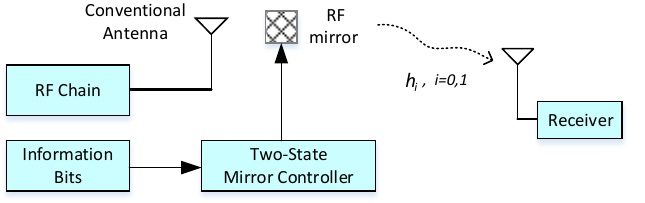}  
\vspace{-0.2cm}
	\caption{B-MBM transmission link with a single RF chain, a two-state RF connected to a controller and a single-antenna receiver.}
	\label{Fig:signaling}
	\vspace{-0.4cm}
\end{figure}

Fig. 1 depicts the basic block diagram of the transmitter utilizing binary media-based modulation. A \emph{single }antenna, which is connected to an RF chain, emits a radio wave at a specified frequency. The transmit antenna is (partly) surrounded by a mirror (a.k.a. a reflective surface) which is suitably positioned to act as a channel scatterer to enable desirable channel properties. Without loss of generality, the diagram displays  the mirror located in the proximity of the transmit antenna. The emitted signal passes through this mirror before reaching its intended destination.  The binary information bits are passed to a mirror controller where it based on the input
bit generates a signal to configure  the mirror's pattern in either of its two distinct states. In other words, the transmitter employs two states to create \emph{two }channel realizations which are mapped to a single bit of information (i.e. either $b=0$ or $1$).  We refer to this configuration as open-loop \emph{Binary MBM (B-MBM)} as there is no feedback involved in the transmission process.

The constellation signal points of B-MBM are therefore
\begin{eqnarray}\label{eq:BMBM}
    &&	s_0(b=0) = h_0= g_0 e^{j\theta_0} \\ 
     &&   s_1(b=1) = h_1= g_1 e^{j\theta_1},
\end{eqnarray}
where $h_i$ denotes the channel realization for each state of the mirror,  $s_i$ denotes the received signal point the receiver, $g_i$ is the amplitude of the channel and $\theta_i$ is the corresponding phase of the channel coefficient, for $i=0,1$. We assume throughout that the channels are independent unit-variance Rayleigh fading variables, i.e. $h_i \sim \CNcal (0,1)$.

The received discrete-time baseband signal at the destination, $y$, is therefore given by
\begin{eqnarray}\label{eq:Rx_signal}
    	 y= s(b)+ z,
\end{eqnarray}
where $z$ denotes zero-mean additive Gaussian noise  with variance of $N_0$.
This leads to a B-MBM scheme with a signal constellation whose points  are separated by
\begin{eqnarray}\label{eq:BMBM_distance1}
    	 \Delta= s_1(b=1) - s_0(b=0) =  g_1 e^{j\theta_1} - g_0 e^{j\theta_0}.
\end{eqnarray}
Thus, based on \eqref{eq:BMBM_distance1}, and the assumption on the channel statistics, it can be shown that
\begin{eqnarray}\label{eq:BMBM_distance}
    	 \Delta  \sim \CNcal (0,2).
\end{eqnarray}
Therefore, the open-loop B-MBM gives rise to the distance of
\begin{eqnarray}\label{eq:BMBM_distance}
    	d = | \Delta |
    \sim \mathrm{Rayleigh}(0,\sqrt{2}).
\end{eqnarray}
That is, the distribution of distance between the constellation points is zero-mean Rayleigh. This is hence not a good signal constellation on the average. The following proposition shows that open-loop B-MBM only achieves a diversity order of one.\footnote{ The diversity order is defined as the exponent of the decay of error at the receiver as a function of SNR \cite{Tse05}.}

\begin{Prop}
    Bit Error Rate (BER) of the open-loop B-MBM is
    \begin{eqnarray}
    	P_b^{\text{(OL B-MBM)}}  &=& \frac{1}{2} \left( 1- \sqrt{\frac{\snr}{ 2+ \snr}} \right) \nonumber \\
       &=& \frac{1}{2 \snr } + O\left(\frac{1}{ \snr^2 }\right),
    \end{eqnarray}
    where $\snr:=N_0^{-1}$ for unit power transmitter,  unit-variance Rayleigh fading and AWGN with variance of $N_0$ at the receiver. Hereafter, $f(x) = O (g(x))$ means that there exists $\Omega,M \in R$
    such that $\left|\frac{f(x)}{g(x)}\right| \leq M$ whenever $x> \Omega$.
\end{Prop}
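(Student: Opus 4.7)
The plan is to condition on the channel realizations $h_0$ and $h_1$ (assumed known at the receiver via, e.g., pilot-based channel estimation over each mirror state), reduce the problem to binary detection between two known complex constellation points in AWGN, and then average the resulting conditional bit-error probability over the fading statistics of $\Delta$ established in \eqref{eq:BMBM_distance}.

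First I would write the ML detector's instantaneous BER. For two equiprobable complex constellation points $s_0$ and $s_1$ separated by Euclidean distance $|\Delta|$ in complex AWGN of total variance $N_0$, the noise projected onto the decision-boundary normal has variance $N_0/2$ per real dimension, so
\[
P_b\mid h_0,h_1 \;=\; Q\!\left(\frac{|\Delta|}{\sqrt{2 N_0}}\right) \;=\; Q\!\left(\sqrt{2\gamma}\right), \qquad \gamma := \frac{|\Delta|^2}{4 N_0}.
\]
Then I would exploit \eqref{eq:BMBM_distance}: since $\Delta\sim\CNcal(0,2)$, $|\Delta|^2$ is exponentially distributed with mean $2$, and hence $\gamma$ is exponential with mean $\bar\gamma=\snr/2$.

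Next I would average the conditional BER to obtain
\[
P_b \;=\; \Ebb_\gamma\!\left[Q(\sqrt{2\gamma})\right] \;=\; \frac{1}{2}\left(1-\sqrt{\frac{\bar\gamma}{1+\bar\gamma}}\right),
\]
where the closed form is the classical Rayleigh-fading identity, obtainable either by Craig's representation $Q(x)=\frac{1}{\pi}\int_0^{\pi/2}\exp(-x^2/(2\sin^2\phi))\,d\phi$ followed by swapping the order of integration, or by a direct integration by parts against the exponential density of $\gamma$. Substituting $\bar\gamma=\snr/2$ immediately recovers the claimed form $\frac{1}{2}\bigl(1-\sqrt{\snr/(2+\snr)}\bigr)$.

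Finally, the asymptotic statement follows from a one-line Taylor expansion: $\sqrt{\snr/(2+\snr)}=(1+2/\snr)^{-1/2}=1-1/\snr+O(1/\snr^2)$, which gives $P_b=1/(2\snr)+O(1/\snr^2)$ and thereby the diversity order of one. I do not anticipate any real obstacle; the only bookkeeping subtlety is the factor-of-two enlargement of the variance of $\Delta$ relative to a single unit-variance channel coefficient (because $\Delta$ is the difference of two independent $\CNcal(0,1)$ variables), which shifts the effective average SNR from $\snr$ (as in standard BPSK over Rayleigh) to $\snr/2$ and thereby produces the $2+\snr$ under the square root.
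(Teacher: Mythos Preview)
Your argument is correct and follows essentially the same route as the paper: both identify that the received constellation distance $|\Delta|$ with $\Delta\sim\CNcal(0,2)$ reduces the problem to BPSK over Rayleigh fading at effective average SNR $\snr/2$, and both substitute into the standard closed-form $\tfrac{1}{2}\bigl(1-\sqrt{\bar\gamma/(1+\bar\gamma)}\bigr)$ before Taylor-expanding. The only cosmetic difference is that the paper invokes the BPSK-over-Rayleigh formula by citing a reference after building an equivalent half-amplitude BPSK system over $h_{\mathrm{eq}}=h_1-h_0$, whereas you write out the $Q$-function averaging explicitly.
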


\begin{proof}
The open-loop B-MBM generates the received signal constellation with the distance of $|h_1-h_0|$.
 Let $h_{\mathrm{eq}}:=h_1-h_0$ and assume that we transmit BPSK symbols with half the amplitude (i.e. $s_i = \pm \frac{1}{2}$)  over $h_{\mathrm{eq}}$. This setup creates $y=h_{\mathrm{eq}}s_i +z $, whose received signal constellation has the distance of $h_{\mathrm{eq}}$. Since $h_{\mathrm{eq}}$ still has a Rayleigh distribution with twice variance, we can use BER of BPSK. The BER of BPSK over Rayleigh fading is given by \cite{Ahlin}
\begin{eqnarray}
    	P_b^{(\text{BPSK})}=\frac{1}{2} \left( 1- \sqrt{\frac{\snr}{ 1+ \snr}} \right).
\end{eqnarray}
Using the equivalent SNR for the open-loop B-MBM, the corresponding BER is hence
  \begin{eqnarray}
    	P_b^{\text{(OL B-MBM)}}  = \frac{1}{2} \!\left( 1- \sqrt{\frac{\snr/2}{ 1+ \snr/2}} \right) = \frac{1}{2 \snr } + O\!\left(\!\frac{1}{ \snr^2 }\!\right).
    \end{eqnarray}
The last equality follows by appropriate Taylor expansions, thereby concluding the proof.
\end{proof}

\begin{figure}
\centering
\vspace{-0.0cm}
	\includegraphics[width=.47\textwidth]{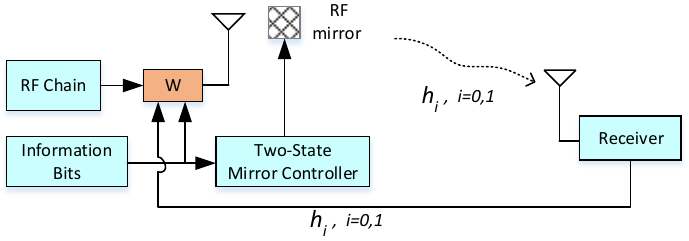} 
\vspace{-0.2cm}
	\caption{Closed-loop B-MBM transmission link with a single RF chain,  a two-state mirror and complex weight  $w$ adjusted by feedback.}
	\label{Fig:signaling}
	\vspace{0.1cm}
\end{figure}
\section{Closed-Loop Binary MBM}\label{sec:Close_Binary MBM}
In the closed-loop B-MBM, the signal prior reaching each RF mirror  is multiplied by a complex coefficient that depends on the information bits and the channel coefficients. That is
\begin{eqnarray}\label{eq:BMBM1}
    	s_0(b=0) &=& w\left(b=0,h_0,h_1\right) \cdot h_0 \\
        s_1(b=1) &=& w\left(b=1,h_0,h_1\right)  \cdot h_1,\label{eq:BMBM2}
\end{eqnarray}
where $w(b,h_0,h_1)$ denotes the weights for $b=0,1$.  The weights are subject to the power constraint at the transmitter. To compute the average weight power using two above arbitrary weights, consider
\begin{align}\label{eq:Avg_Weight}
    	\mathbb{E} \left[|w|^2\right] &= \int \left|w(b,h_0,h_1)\right|^2 p(b,h_0,h_1)db dh_0 dh_1,
\end{align}
where $p(b,h_0,h_1)$ is the joint probability of input bits $b$, and channel coefficients $h_0,h_1$.
Since $b$ is a discrete random variable having two possible values (0 and 1), while $h_0$ and $h_1$ are two continuous random variables,
$\eqref{eq:Avg_Weight}$ simplifies to
\begin{align}\label{eq:Avg_Weight2}
    &	\mathbb{E} \left[|w|^2\right]\nonumber\\
    &= p_b (b=0) \int \left|w(b=0,h_0,h_1)\right|^2 p(h_0,h_1|b=0) dh_0 dh_1\nonumber\\
     &+ p_b (b=1) \int\left|w(b=1,h_0,h_1)\right|^2 p(h_0,h_1|b=1)dh_0 dh_1 \nonumber\\
    &=  p_b (b=0) \int \left|w_0(h_0,h_1)\right|^2 p(h_0,h_1|b=0) dh_0 dh_1\nonumber\\
                                     &+ p_b (b=1) \int\left|w_1(h_0,h_1)\right|^2 p(h_0,h_1|b=1)dh_0 dh_1,
\end{align}
where the last equality follows by setting $w(b=0,h_0,h_1 )=w_0 (h_0,h_1)$  and $w(b=1,h_0,h_1 )=w_1 (h_0,h_1)$.
As the channel realizations are independent of the information bits, we therefore have
\begin{eqnarray}\label{eq:Joint_pdf_channel}
    p(h_0,h_1|b=0)=p(h_0,h_1|b=1)=p(h_0,h_1).
\end{eqnarray}
By further assuming that $p_b (b=0)=p_b (b=1)=0.5$, from  \eqref{eq:Avg_Weight2} and \eqref{eq:Joint_pdf_channel}, we obtain
\begin{align}\label{eq:Avg_Weight3}
      &	\mathbb{E} \left[|w|^2\right]=\nonumber\\
      & \frac{1}{2} \int \left[ \left|w_0(h_0,h_1)\right|^2 + \left|w_1(h_0,h_1)\right|^2 \right] p(h_0,h_1) dh_0 dh_1.
\end{align}
Therefore, to obtain the unit average power constraint on the weights at the transmitter where $\mathbb{E} \left[|w|^2\right]\leq 1$, it is sufficient to meet the condition
\begin{align}\label{eq:power_constraint}
       \left|w_0(h_0,h_1)\right|^2 + \left|w_1(h_0,h_1)\right|^2 \leq 2.
\end{align}
In the sequel, for ease of presentation we set  $w_i:=w_i(h_0,h_1)$ for $i=0,1$.
Based on the signal points in \eqref{eq:BMBM1} and \eqref{eq:BMBM2}, the maximum likelihood  detection rule at the receiver then becomes the following.  For the received signal $y=s_i + z$, the receiver declares the transmitted bit as $0$ if
\begin{eqnarray}\label{eq:BMBM}
    |y - w_0h_0|^2 < |y - w_1h_1|^2,
\end{eqnarray}
otherwise, it declares 1 as the transmitted bit. Therefore, it is optimal to maximize the distance between the constellation points.  That is, the optimal weights \footnote{Throughout,  $(\cdot)^{\ast}$  denotes the corresponding optimal value and $(\cdot)^{\dagger}$  denotes the complex conjugate operation.} can be found based on
\begin{eqnarray}\label{eq:BMBM}
   && 	w_i^{\ast} = \ \max \arg_{w_i} \left|w_0h_0 - w_1 h_1\right | \\
   && \ \ \ \  \ \ \ \ \text{s.t.}  \ \ |w_0|^2 +  |w_1|^2 \leq 2.
\end{eqnarray}

The following proposition provides closed-form expressions of the weights when perfect knowledge of both instantaneous channel coefficients are available at the transmitter.
\begin{Prop}\label{prop:opt_weight}
The optimal weights for the closed-loop B-MBM are given by
 \begin{eqnarray}\label{eq:BMBM_optimalweights}
   && 	w_0^{\ast} =\frac{\sqrt{2}}{\sqrt{|h_0|^2+|h_1|^2}} |h_0| \\
   &&   w_1^{\ast} =\frac{\sqrt{2} e^{j\pi}}{\sqrt{|h_0|^2+|h_1|^2}} \frac{h_0 h_1^{\dagger}}{|h_0|}.
\end{eqnarray}

\end{Prop}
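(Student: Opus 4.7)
The plan is to decouple the optimization over magnitudes from the optimization over phases. First I would write the weights in polar form, $w_i = |w_i| e^{j\phi_i}$, and recall $h_i = g_i e^{j\theta_i}$. By the triangle inequality,
\[
|w_0 h_0 - w_1 h_1| \;\le\; |w_0|\,|h_0| + |w_1|\,|h_1|,
\]
with equality exactly when $w_0 h_0$ and $w_1 h_1$ are antiparallel, i.e.\ $(\phi_0+\theta_0)-(\phi_1+\theta_1) \equiv \pi \pmod{2\pi}$. Any optimizer must therefore satisfy this anti-alignment condition, and the problem collapses to the purely real task of maximizing $|w_0|\,|h_0|+|w_1|\,|h_1|$ subject to $|w_0|^2+|w_1|^2\le 2$.

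For this reduced magnitude problem I would apply the Cauchy--Schwarz inequality, which gives $|w_0|\,|h_0|+|w_1|\,|h_1| \le \sqrt{|w_0|^2+|w_1|^2}\,\sqrt{|h_0|^2+|h_1|^2}$, and observe that the power constraint must be active at the optimum (rescaling the magnitudes upward can only enlarge the objective). Equality in Cauchy--Schwarz forces $(|w_0|,|w_1|)$ to be a positive scalar multiple of $(|h_0|,|h_1|)$; combining this with $|w_0|^2+|w_1|^2 = 2$ pins down the optimal magnitudes as $|w_i^{\ast}| = \sqrt{2}\,|h_i|/\sqrt{|h_0|^2+|h_1|^2}$ for $i=0,1$, matching the proposition.

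The remaining step is to recover the phases. Both the objective and the power constraint are invariant under the simultaneous rotation $w_i\mapsto e^{j\alpha}w_i$, so the optimum is only unique up to a common phase. Fixing the natural normalization $\phi_0=0$ makes $w_0^{\ast}$ real and nonnegative, which matches the stated expression for $w_0^{\ast}$ directly. The anti-alignment condition then forces $\phi_1 = \theta_0-\theta_1+\pi$, and substituting $e^{j\theta_0}=h_0/|h_0|$ together with $e^{-j\theta_1}=h_1^{\dagger}/|h_1|$ rewrites $|w_1^{\ast}|e^{j\phi_1}$ as the claimed expression, with the $e^{j\pi}$ factor and the $h_0 h_1^{\dagger}/|h_0|$ grouping appearing naturally. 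I expect the main obstacle to be essentially cosmetic: without explicitly fixing the phase gauge the answer admits many equivalent-looking expressions, so some care is needed to land on precisely the representation in \eqref{eq:BMBM_optimalweights} rather than a rotated variant.
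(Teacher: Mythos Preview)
Your proposal is correct and follows the same high-level decomposition as the paper: first optimize the relative phase so that the two weighted channel coefficients are antiparallel, then optimize the magnitudes under the power constraint, and finally fix the phase gauge $\phi_0=0$ to pin down the particular representatives in \eqref{eq:BMBM_optimalweights}. Where you differ is in the tools used for each sub-step. The paper reaches the anti-alignment condition by expanding $d^2$ via the law of cosines and inspecting the $\cos(\theta_1+\phi_1-\theta_0-\phi_0)$ term, whereas you obtain it in one line from the triangle inequality. For the magnitude problem, the paper argues activity of the constraint by contradiction, substitutes $a_1=\sqrt{2-a_0^2}$, and then runs a first- and second-derivative test; you instead invoke Cauchy--Schwarz, which simultaneously gives the upper bound, the equality condition, and the activity of the constraint. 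Your route is shorter and avoids any calculus, while the paper's derivative computation has the minor advantage of being fully self-contained for a reader who might not immediately recognize the Cauchy--Schwarz structure. Your closing remark about the phase gauge is apt: the paper makes the same choice $\phi_0=0$ without commenting on the underlying rotational invariance, so your explanation of why this normalization is both permissible and necessary to match \eqref{eq:BMBM_optimalweights} is a useful addition.
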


\begin{proof}
Let
\begin{eqnarray}
   w_i = a_i e^{j\phi_i}, \  \text{and} \  h_i = |h_i| e^{j\theta_i}.
\end{eqnarray}
The received constellation points are depicted in Fig.~3. The distance between the points can be  computed according to
\begin{eqnarray}\label{angle}
  d^2&=& a_0^2 |h_0|^2 + a_1^2 |h_1| ^2 \nonumber\\
  &&-  2a_0a_1|h_0| |h_1| \cos(\theta_1+\phi_1 - \theta_0-\phi_0 ).
\end{eqnarray}
Therefore, we note that for any choice of $a_i$, the distance between the two points is maximized if  $\theta_1+\phi_1 - \theta_0-\phi_0=\pi$.  That is, the line connecting the points passes through the origin.
Without loss of generality, we can set
\begin{eqnarray}\label{eq:BMBM}
   && \phi_0 =  0 \\
   && \phi_1 = \pi + \theta_0 - \theta_1.
\end{eqnarray}
This rotates the constellation point $s_1$ and changes it to $s_1^{\dagger}$ as illustrated in Fig.~3.

This simplifies the optimization problem to
\begin{eqnarray}\label{eq:BMBM_obejective}
   && 	\max  \ \  a_0|h_0| + a_1 |h_1 | \label{eq:objective} \\
   && \text{s.t.} \ a_0^2 +  a_1^2\leq 2.
\end{eqnarray}
We observe that the optimal values $(a_0^{\ast}$,   $a_1^{\ast})$ should satisfy the constraint $a_0^2 +  a_1^2 \leq 2$ with equality.
This can be proved by contradiction. Assume that the optimal pair  $(a_0^{\ast},a_1^{\ast})$
 yields  $(a_0^{\ast})^2 +  (a_1^{\ast})^2 < 2$.
 Now let $a_0=a_0^{\ast}+ t$ such that $t$ is chosen to
satisfy $(a_0^{\ast}+t)^2 +  (a_0^{\ast})^2 =2$.  For this new pair, the objective function $(a_0^{\ast}+ t)|h_0| + a_1^{\ast} |h_1 |$ becomes larger  than \eqref{eq:objective}, since $t>0$. Therefore, the optimal solution should satisfy the power constraint by equality.   Thus, we can assume that
\begin{eqnarray}\label{eq:equality_constraint}
  a_1= \sqrt{2 - a_0^2}.
\end{eqnarray}
Replacing back \eqref{eq:equality_constraint} into \eqref{eq:BMBM_obejective}, we thus obtain
\begin{eqnarray}
    	\max a_0 |h_0| + (2-a_0^2)^{\frac{1}{2}} |h_1| =: \max f(a_0),
\end{eqnarray}
where $0<a_0<\sqrt{2}$.
To find the optimal solution, we solve
\begin{eqnarray}
    	\frac{\partial f}{\partial a_0}=0.
\end{eqnarray}
This yields
\begin{eqnarray}\label{optimal_value}
    	a_0^{\ast} =\sqrt{2}\frac{|h_0|}{\sqrt{|h_0|^2+|h_1|^2}},
\end{eqnarray}
and
\begin{eqnarray}
    	a_1^{\ast} =\sqrt{2}\frac{|h_1|}{\sqrt{|h_0|^2+|h_1|^2}}.
\end{eqnarray}

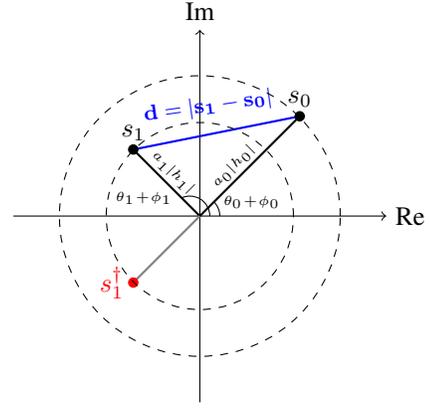
\begin{figure}[t]\label{Fig:opt_const}
\centering
\begin{tikzpicture}[scale=0.885]
\draw[->] (-2.8,0) -- (2.8,0) node[right] {Re};
\draw[->] (0,-2.8) -- (0,2.8) node[above] {Im};
\draw[blue, thick](1.5,1.5) -- (-1,1);
\node[blue,rotate=8] (d) at (.125,1.65) {\footnotesize$\mathbf{d=|s_1-s_0|}$};
\draw[ thick] (0,0) -- (-1,1);
\draw[ thick] (0,0) -- (1.5,1.5);
\draw[gray,thick] (0,0) -- (-1,-1);
\filldraw[black] (1.5,1.5) circle (2pt) node[above] {$s_0$};
\filldraw[black] (-1,1) circle (2pt)node[above] {$s_1$};
\filldraw[red] (-1,-1) circle (2pt)node[left] {$s_1^{\dagger}$};
\draw [dashed, line width=.4pt] (0,0) circle (40pt);
\draw [dashed, line width=.4pt] (0,0) circle (60pt);
\draw (3mm,0mm) arc (0:45:3mm)node[right] {\tiny $\theta_0\!+\! \phi_0$};
\draw (1.5mm,0mm) arc (0:115:3mm)node[left] {\tiny $\theta_1\!+\! \phi_1$};;
\node[rotate=45] at (.51,.751) {\tiny $a_0|h_0|$};
\node[rotate=315] at (-.4,.651) {\tiny $a_1|h_1|$};
\end{tikzpicture}
\vspace{-0.2cm}
\caption{Closed-loop B-MBM received constellation points with rotation.}
\vspace{-0.2cm}
\end{figure}

To prove that  \eqref{optimal_value} produces the maximum of $f(a_0)$, we compute the second order derivative and show that its value is negative. Consider
\begin{eqnarray}
    	\frac{\partial^2 f}{\partial a_0^2}&=& - (2-a_0^2)^{-\frac{1}{2}} |h_1| -  a_0^2 (2-a_0^2)^{-\frac{3}{2}} |h_1|  \nonumber\\
    &=&-|h_1|(2-a_0^2)^{-\frac{1}{2}} {\frac{2 }{2-a_0^2}}.
\end{eqnarray}
Since at the chosen points we have  $ 2 -(a^{\ast}_0)^2 = (a^{\ast}_1)^2 > 0$,
\begin{eqnarray}
    	\frac{\partial^2 f}{\partial a_0^2} (a_0=a_0^{\ast})<0,
\end{eqnarray}
which proves the optimality and completes the proof.

\end{proof}


\section{Performance of Closed-Loop  B-MBM}\label{sec:performance_CL_BMBM}
In this section, we analytically compute the BER of closed-loop B-MBM configuration  employing the optimal weight selection outlined in Prop.~2. The analysis indicates that the BER decays proportional  to $\frac{1}{\snr^2 } $ for large values of $\snr$. This therefore  shows that the closed-loop B-MBM attains the diversity order of two, highlighting a substantial performance leap  when contrasted  to  the open-loop B-MBM configuration when feedback is accessible at the transmit side.

\vspace{1cm}
%
%
%
%
%
%
%



{}
\vspace*{.19cm}
\begin{Prop}
    BER of the closed-loop B-MBM is
 \vspace{-.22cm}
\begin{eqnarray} \label{eq:BER_B_MBM}
    \!\!\!\!\!\!&&P_b^{\text{(CL B-MBM)}} \nonumber\\
 \!\!\!\!\!\! &=&\!\!\frac{1}{4} \left( 1- \sqrt{\frac{\snr}{ 2+ \snr}} \right)^2\!\! \left[ 1 + \frac{1}{2} \left(1 + \sqrt{\frac{\snr}{2 + \snr}} \right)^2\right]\nonumber\\
 \!\!\!\!\!\! &=&\!\!\frac{3}{4 \snr^2 } +  O\left(\frac{1}{ \snr^3 }\right).
\end{eqnarray}
\vspace{-.22cm}
\end{Prop}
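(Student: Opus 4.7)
The plan is to reduce the bit error probability to that of binary antipodal signaling with a diversity-combined effective SNR, then evaluate a standard MGF integral. First, I would substitute the optimal amplitudes and phases from Proposition~\ref{prop:opt_weight} into the received constellation. A direct computation yields $w_0^{\ast}h_0 - w_1^{\ast}h_1 = \sqrt{2(|h_0|^2+|h_1|^2)}\,e^{j\theta_0}$, so with the rotation built into $w_1^{\ast}$ the two constellation points lie antipodally about the origin, separated by the distance $d^{\ast} = \sqrt{2(|h_0|^2+|h_1|^2)}$. Given $(h_0,h_1)$ and complex AWGN $z\sim\CNcal(0,N_0)$, the ML rule for a binary antipodal constellation then delivers the familiar instantaneous bit error probability $Q(d^{\ast}/\sqrt{2N_0}) = Q(\sqrt{U\,\snr})$, with $U := |h_0|^2+|h_1|^2$.

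Next, since $|h_0|^2$ and $|h_1|^2$ are i.i.d.\ unit-mean exponential, $U$ has the $\mathrm{Gamma}(2,1)$ density $f_U(u)=u e^{-u}$ and moment generating function $M_U(s)=(1-s)^{-2}$. I would then apply Craig's identity $Q(x)=\tfrac{1}{\pi}\int_0^{\pi/2}\exp\bigl(-x^2/(2\sin^2\theta)\bigr)\,d\theta$ and exchange the order of integration, obtaining
\begin{equation*}
P_b^{\text{(CL B-MBM)}} \;=\; \frac{1}{\pi}\int_0^{\pi/2}\frac{\sin^{4}\theta}{\bigl(\sin^{2}\theta + \snr/2\bigr)^{2}}\,d\theta.
\end{equation*}

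The closed-form evaluation then follows from the algebraic identity $\frac{x^2}{(x+\alpha)^2} = 1 - \frac{2\alpha}{x+\alpha} + \frac{\alpha^2}{(x+\alpha)^2}$ applied at $x=\sin^{2}\theta$, $\alpha=\snr/2$, combined with the standard integral $\int_0^{\pi/2}(\sin^{2}\theta+\alpha)^{-1}d\theta = \pi/(2\sqrt{\alpha(1+\alpha)})$ and its derivative in $\alpha$. Introducing $\mu := \sqrt{\snr/(2+\snr)}$ and regrouping the three contributions yields the stated factored expression. The asymptotic then drops out by expanding $\mu = 1 - \snr^{-1}+O(\snr^{-2})$: the factor $(1-\mu)^{2}$ is $\snr^{-2}+O(\snr^{-3})$ while the bracket $1+\tfrac{1}{2}(1+\mu)^{2}\to 3$, delivering the leading term $\tfrac{3}{4\snr^{2}}$.

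I expect the main obstacle to be the algebraic bookkeeping in the penultimate step: the three integral pieces each contribute terms of comparable magnitude, and some care is needed so that the cancellations land precisely on the compact factored form stated in the proposition rather than on an equivalent but expanded polynomial in $\mu$. Otherwise each step (identification of $d^{\ast}$, distribution of $U$, Craig/MGF averaging, and Taylor expansion) is routine.
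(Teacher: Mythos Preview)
Your argument is correct. You compute $d^{\ast}=\sqrt{2(|h_0|^2+|h_1|^2)}$, reduce to $Q(\sqrt{U\,\snr})$ with $U\sim\mathrm{Gamma}(2,1)$, and average via Craig's identity and the MGF; each step checks out, and the only delicate part is, as you note, the final algebraic collapse into the factored form in $\mu$.

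The paper, however, takes a shorter route. After obtaining the same $d^{\ast}$, it observes that Maximum Ratio Combining (MRC) precoded BPSK over two transmit chains produces a received constellation with distance $d_{\texttt{MRC}}=2\sqrt{|h_0|^2+|h_1|^2}=\sqrt{2}\,d^{\ast}$. Since the channel statistics are identical, the closed-loop B-MBM BER is therefore the known MRC-BPSK BER formula (cited from \cite{Ahlin}) with $\snr_{\text{MRC}}$ replaced by $\snr/2$. This sidesteps the Craig/MGF integral and the bookkeeping you flagged: the factored expression in $\mu=\sqrt{\snr/(2+\snr)}$ is simply read off from the reference. What your approach buys is self-containment---you do not rely on an external closed-form---whereas the paper's approach buys brevity and makes the $3\,$dB relationship to MRC (and hence, in the next section, the exact match to Alamouti-coded BPSK) immediate.
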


\begin{proof}
The optimal B-MBM generates the received constellation with the distance
\begin{eqnarray}\label{eq:BMBM_dist}
    	d^{\ast}&=& a_0^{\ast} |h_0| + a_1^{\ast} |h_1|
           = \sqrt{2} \sqrt{|h_0|^2+|h_1|^2}.
\end{eqnarray}

We recall that the Maximum Ratio Combing (MRC) precoding of a BPSK symbol transmitted simultaneously with two transmit RF chains is a transmit diversity scheme with weights
$w_i ={h_i^{\ast}}/{\sqrt{|h_0|^2+|h_1|^2}}$ for $i=0,1$, which makes the total unit transmit power since $|w_0|^2+|w_1 |^2=1$.
The received signal $y$ is the sum of the transmitted signals multiplied by respective channel coefficients, which thus can be represented as
\begin{eqnarray}
    	y  &=& \sqrt{|h_0|^2+|h_1|^2} \  x + z,
\end{eqnarray}
at the receiver.  The received BPSK signal constellation points then become
\begin{eqnarray}
    	s_0  = +\sqrt{{|h_0|^2+|h_1|^2}}, \
        s_1  = - \sqrt{{|h_0|^2+|h_1|^2}}.
\end{eqnarray}
This leads to the distance between the received constellation points which is equal to
  \begin{eqnarray}\label{eq:distance_relation_mrc}
    	d_{\texttt{MRC}}  &=& 2\sqrt{|h_0|^2+|h_1|^2}=\sqrt{2}d^{\ast}.
\end{eqnarray}
Given the channel statistics are identical in both scenarios, we can evaluate the  BER of the closed-loop B-MBM via BER of the MRC precoded BPSK, given in \cite{Ahlin} as
\begin{eqnarray}\label{eq:BER_MRC}
    	P_b^{\text{(MRC)}} \! =\! \frac{1}{4}\! \left( 1- \sqrt{\tfrac{\snr_{\text{MRC}}}{ 1+ \snr_{\text{MRC}}}} \right)^2 \!\! \left[ 1 + \frac{1}{2} \left(1 \! + \! \sqrt{\tfrac{\snr_{\text{MRC}}}{1  + \snr_{\text{MRC}}}} \right)^2\right],
\end{eqnarray}
by replacing $\snr_{\text{MRC}}$ in \eqref{eq:BER_MRC} by $\tfrac{1}{2}\snr$, because of \eqref{eq:distance_relation_mrc}.
\end{proof}

%
%
%
%
%



\section{Connection to Alamouti Code}\label{sec:Alamouti}

\begin{Prop}
The optimal closed-loop B-MBM performs as good as Alamouti-coded BPSK scheme.
\end{Prop}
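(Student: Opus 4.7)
The plan is to establish the equivalence at the BER level by showing that the effective per-symbol received constellation distance for Alamouti-coded BPSK over a two-transmit, one-receive antenna Rayleigh channel is the same random variable as the distance $d^{\ast}=\sqrt{2}\sqrt{|h_0|^2+|h_1|^2}$ derived in~\eqref{eq:BMBM_dist} for the optimal closed-loop B-MBM. Because both schemes operate on the same pair of independent unit-variance Rayleigh channels $h_0,h_1$ and see AWGN of the same variance $N_0$ at the detector, matching the conditional distance and post-processing noise statistics forces identical conditional, and therefore identical average, BER.

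First, I would write down the Alamouti encoding for a two-transmit-antenna, one-receive-antenna system operated at unit total transmit power per channel use. Over two consecutive symbol periods, a BPSK pair $(x_1,x_2)\in\{\pm 1\}^2$ is mapped to the orthogonal block whose first-slot transmission from the two antennas is $(x_1/\sqrt{2},x_2/\sqrt{2})$ and whose second-slot transmission is $(-x_2^{\dagger}/\sqrt{2},x_1^{\dagger}/\sqrt{2})$; the factor $1/\sqrt{2}$ normalizes the total instantaneous transmit power per channel use to one, which matches the average weight-power budget~\eqref{eq:power_constraint} imposed on the B-MBM. Next, I would apply the standard Alamouti combiner to the received samples $y_1=(h_0 x_1+h_1 x_2)/\sqrt{2}+z_1$ and $y_2=(-h_0 x_2^{\dagger}+h_1 x_1^{\dagger})/\sqrt{2}+z_2$ with $z_i\sim\CNcal(0,N_0)$; exploiting the orthogonality of the code decouples $x_1$ and $x_2$, and after renormalization each symbol is observed through a scalar channel of the form $\tilde{y}_i=\sqrt{|h_0|^2+|h_1|^2}\,x_i/\sqrt{2}+\tilde{z}_i$ with $\tilde{z}_i\sim\CNcal(0,N_0)$. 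The resulting received BPSK constellation distance is therefore $d_{\text{Alamouti}}=\sqrt{2}\sqrt{|h_0|^2+|h_1|^2}$, which coincides exactly with $d^{\ast}$ in~\eqref{eq:BMBM_dist}.

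Finally, since the two schemes deliver BPSK at the detector with identical conditional distance and identical post-combining noise variance as functions of the same pair of $\CNcal(0,1)$ channels, their conditional BERs $Q(d^{\ast}/\sqrt{2N_0})$ are identical sample by sample, and averaging over the channel distribution yields the same closed-form BER~\eqref{eq:BER_B_MBM} in both cases; in particular both schemes achieve diversity order two with the same coding-gain constant. The main, and essentially only, obstacle is careful normalization bookkeeping: the $1/\sqrt{2}$ factors entering Alamouti from splitting power across two antennas and entering closed-loop B-MBM through the per-use constraint~\eqref{eq:power_constraint} must be tracked consistently so that the comparison is at the same total transmit power; once this is done, the equivalence is an immediate consequence of Prop.~\ref{prop:opt_weight} and the distance computation in~\eqref{eq:BMBM_dist}, while the single-RF-chain nature of B-MBM contrasts with the two-antenna Alamouti implementation only at the hardware level, not in BER performance.
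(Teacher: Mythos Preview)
Your proposal is correct and follows essentially the same approach as the paper: compute the post-combining received constellation distance for Alamouti-coded BPSK with unit total transmit power, obtain $d_{\text{Alamouti}}=\sqrt{2}\sqrt{|h_0|^2+|h_1|^2}$, and observe that this coincides with $d^{\ast}$ in~\eqref{eq:BMBM_dist}. Your write-up is in fact slightly more explicit than the paper's, since you verify that the post-combining noise variance is also $N_0$ and invoke the conditional $Q$-function expression to conclude equality of the averaged BER, whereas the paper stops at the distance match.
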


\begin{proof}

Recall that the Alamouti-coded \cite{Alamouti} received signal from two antennas after post-processing at the receiver is given by \footnote{We have added an additional power  normalization factor of $\sqrt{2}$ as compared to Eq. (13) in \cite{Alamouti} to keep the total transmit power unchanged. }
\begin{eqnarray}
    	y  &=& \sqrt{\tfrac{|h_0|^2+|h_1|^2}{2}} x + z.
\end{eqnarray}
That is, when the unit-power BPSK signal is transmitted, the received signal constellation points becomes

\begin{eqnarray}
    	s_0 =+\sqrt{\tfrac{|h_0|^2+|h_1|^2}{2}}, \         s_1  = - \sqrt{\tfrac{|h_0|^2+|h_1|^2}{2}}.
\end{eqnarray}
This leads to the distance between the signal points which is
  \begin{eqnarray}
    	d  &=& \sqrt{2(|h_0|^2+|h_1|^2)}.
\end{eqnarray}
Establishing that $d$ is equal to $d^{\ast}$ in \eqref{eq:BMBM_dist} completes  the proof.
\end{proof}


\begin{table*}\label{Table_summary}
  \centering
  \vspace{-.002cm}
  \caption{Comparison of BPSK with Open-loop (OL) and Closed-Loop (CL) B-MBM.}
  \vspace{-.002cm}
\begin{tabular}{| c | c |  c | c | c| c| c|}
\hline
 \multirow{2}*{ Mod.} & \multirow{2}*{Scheme}& \# of RF   &  Feed- & \multirow{2}*{Bit Error Rate (BER)}& High-SNR & Div.  \\
& &  Chains & Back  & & Approximation & Order  \\
 \hline  \hline
  & SISO & 1 & No & $P_b^{\text{(SISO)}}  = \frac{1}{2} \left( 1- \sqrt{\frac{\snr}{ 1+ \snr}} \right)$  & $ \frac{1}{4 \snr } + O\left(\frac{1}{ \snr^2 }\right)$  & 1 \\
 \cline{2-7}
 \multirow{1}{*}{ \textbf{BPSK}}   & MRC & 2 & Yes& $P_b^{\text{(MRC)}}  = \frac{1}{4} \left( 1- \sqrt{\frac{\snr}{ 1+ \snr}} \right)^2 \left[ 1 + \frac{1}{2} \left(1 + \sqrt{\frac{\snr}{1 + \snr}} \right)^2\right] $& $ \frac{3}{16 \snr^2 } +  O\left(\frac{1}{ \snr^3 }\right) $ & 2\\
 \cline{2-7}
    & Alamouti & 2 & No & $P_b^{\text{(Alamouti)}}
 =\frac{1}{4} \left( 1- \sqrt{\frac{\snr}{ 2+ \snr}} \right)^2 \left[ 1 + \frac{1}{2} \left(1 + \sqrt{\frac{\snr}{2 + \snr}} \right)^2\right]$ &$ \frac{3}{4 \snr^2 } +  O\left(\frac{1}{ \snr^3 }\right) $ & 2 \\
  \hline   \hline
  & OL: $w_i\!=\!1$& 1 &No  & $P_b^{\text{(OL B-MBM)}}  = \frac{1}{2} \left( 1- \sqrt{\frac{\snr}{ 2+ \snr}} \right)$& $ \frac{1}{2 \snr } + O\left(\frac{1}{ \snr^2 }\right)$ & 1  \\
  \cline{2-7}
  \multirow{1}{*}{\textbf{B-MBM}}  & CL: $w_i\!=\!a_i e^{j\phi_i}$ & 1 & Yes & $P_b^{\text{(CL B-MBM)}} =
    \frac{1}{4} \left( 1- \sqrt{\frac{\snr}{ 2+ \snr}} \right)^2 \left[ 1 + \frac{1}{2} \left(1 + \sqrt{\frac{\snr}{2 + \snr}} \right)^2\right]$ &$ \frac{3}{4 \snr^2 } +  O\left(\frac{1}{ \snr^3 }\right) $ & 2\\
     \cline{2-7}
    & CL: $w_i\!=\!e^{j\phi_i}$ & 1 & Yes & $P_{b, \text{Unit-Amp}}^{\text{(CL B-MBM)}}  =\frac{1}{2} \left( 1-  \frac{\sqrt{\snr (\snr+4)}}{\snr+2} \right)$ &$ \frac{1}{ \snr^2 } +  O\left(\frac{1}{ \snr^3 }\right) $ & 2\\
  \hline
\end{tabular}
\end{table*}


\section{Closed-Loop B-MBM with Unit-\\Amplitude Weights}\label{sec:Analog_Close_Binary MBM}
In some practical applications, optimizing only the signal phase can prove beneficial due to the energy efficiency characteristics of  power amplifiers, allowing the amplitude of the transmit signal to remain unchanged.
This approach not only facilitates an \emph{analog }implementation of the closed-loop B-MBM utilizing phase shifters exclusively but also allows for integration with the RF mirrors, effectively \emph{decoupling} it from the conventional antenna during the transmission process (i.e. the W-box is placed on the RF mirror in Fig. 2).

By the results in Section~\ref{sec:Close_Binary MBM},  the optimized \emph{unit-amplitude} weights when perfect knowledge of both instantaneous channel coefficients are available at the transmitter, are given by
 \begin{align}\label{eq:BMBM_optimal_analog_weights}
  	w_0^{\ast} &= 1  ,  \ \ \    w_1^{\ast} =\frac{h_0 h_1^{\dagger}}{|h_0| |h_1|}e^{j\pi}.
\end{align}
This yields the following BER result.
\begin{Prop}
    BER of  the closed-loop B-MBM with unit-amplitude weights is
    \begin{align}\label{eq:BER_analog_B_MBM}
    P_{b, \text{Unit-Amp}}^{\text{(CL B-MBM)}}  &=\frac{1}{2} \left( 1-  \frac{\sqrt{\snr (\snr+4)}}{\snr+2} \right)\nonumber\\
                                              &=\frac{1}{\snr^2 } +  O\left(\frac{1}{ \snr^3 }\right).
\end{align}
\end{Prop}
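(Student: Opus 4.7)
The plan is to reuse the channel-to-distance computation of Section~\ref{sec:Close_Binary MBM} and reduce the resulting expectation to the BER of a $2$-branch equal-gain combining (EGC) receiver. First, I would substitute $w_0^{\ast}=1$ and $w_1^{\ast}=-h_0 h_1^{\dagger}/(|h_0|\,|h_1|)$ into the received constellation distance. Since $w_1^{\ast}h_1=-h_0|h_1|/|h_0|$, one obtains
\begin{equation*}
w_0^{\ast}h_0 - w_1^{\ast}h_1 = \frac{h_0}{|h_0|}\bigl(|h_0|+|h_1|\bigr),
\end{equation*}
so $d=|h_0|+|h_1|$. Note also that $|w_0^{\ast}|^2+|w_1^{\ast}|^2=2$, so the power constraint \eqref{eq:power_constraint} is tight, confirming these are admissible weights of the type analyzed in Section~\ref{sec:Close_Binary MBM}.

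Next, I would follow the template of the previous proofs in the paper: conditioned on $(h_0,h_1)$, the ML detector sees two equiprobable points at distance $d$ in complex Gaussian noise of variance $N_0$, giving conditional BER $Q(d/\sqrt{2N_0})=Q(\sqrt{\snr/2}\,(|h_0|+|h_1|))$ and hence
\begin{equation*}
P_{b,\text{Unit-Amp}}^{\text{(CL B-MBM)}}=\Ebb\!\left[Q\!\left(\sqrt{\snr/2}\,(|h_0|+|h_1|)\right)\right].
\end{equation*}
This expectation is structurally identical to the average BER of BPSK under $2$-branch EGC over i.i.d.\ Rayleigh fading, whose conditional BER is $Q(\sqrt{\bar\gamma}\,(|h_0|+|h_1|))$ with per-branch average SNR $\bar\gamma$; matching identifies $\bar\gamma=\snr/2$. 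Invoking the standard closed form $P_b^{\text{EGC-BPSK}}(\bar\gamma)=\tfrac{1}{2}\bigl[1-\sqrt{1-(1+\bar\gamma)^{-2}}\bigr]$ and using $(\snr+2)^2-4=\snr(\snr+4)$ then yields the stated exact expression. The high-SNR asymptote $P_b=1/\snr^2+O(1/\snr^3)$ follows from the expansion $\sqrt{1-4/(\snr+2)^2}=1-2/(\snr+2)^2+O(\snr^{-4})$.

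The main obstacle is justifying the EGC-BPSK closed form from scratch, in case a self-contained derivation is preferred over a citation. That reduces to evaluating the MGF $\Phi(\lambda)=\Ebb[e^{-\lambda(|h_0|+|h_1|)^2}]$ and integrating $(1/\pi)\int_0^{\pi/2}\Phi\bigl(\snr/(4\sin^2\theta)\bigr)\,d\theta$ via Craig's representation of $Q(\cdot)$. The cleanest route is the polar substitution $|h_0|=r\cos\psi$, $|h_1|=r\sin\psi$, which exploits $(|h_0|+|h_1|)^2=r^2(1+\sin 2\psi)$; the radial integral collapses via $\int_0^\infty r^3 e^{-\alpha r^2}\,dr=1/(2\alpha^2)$, and the remaining single angular integral reduces to an $\arctan$ through a Weierstrass substitution. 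This is tedious but mechanical, and plugging the resulting closed form back into Craig's formula recovers the EGC-BPSK expression used in the third step above.
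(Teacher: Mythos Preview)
Your proposal is correct and follows essentially the same route as the paper: compute the received-constellation distance $d=|h_0|+|h_1|$ from the unit-amplitude weights, recognize this as a $2$-branch equal-gain combining system operating at half the SNR, and then invoke the known closed-form EGC--BPSK error probability (the paper cites \cite{Zhang} for this, whereas you additionally sketch a self-contained Craig/MGF derivation and spell out the algebra $(\snr+2)^2-4=\snr(\snr+4)$ that the paper leaves implicit).
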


\begin{proof}
The optimal closed-loop unit-amplitude B-MBM using the above  weights in \eqref{eq:BMBM_optimal_analog_weights} generates the received signal constellation points with the distance
\begin{align}\label{eq:Analog_BMBM_dist}
    	d^{\ast}_{\text{Unit-Amp}}=\left|s_0 -s_1\right| &= \left|h_0 -h_1\frac{h_0 h_1^{{\dagger}}}{|h_0||h_1|}e^{j\pi}\right| \nonumber \\
                 &= \left|h_0 +\frac{h_0 |h_1|}{|h_0|}\right| \nonumber \\
                  &= |h_0|\left| 1 +\frac{|h_1|}{|h_0|}\right| \nonumber \\
                   &= |h_0|+|h_1|,
\end{align}
where the last equality holds since $|ab|=|a||b|$ and $\big||a|+|b|\big |=|a|+|b|$.
Next recall that the received signal  from two antennas after post-processing via equal gain receive  combining \cite{Zhang} at the receiver is given by
\begin{eqnarray}
    	y  &=& \frac{|h_0|+|h_1|}{\sqrt{2}} x + z.
\end{eqnarray}
That is, when the unit-power BPSK signal is transmitted, the received signal constellation points becomes

\begin{align}
    	s_0 &=+\frac{|h_0|+|h_1|}{\sqrt{2}}, \ \ \text{and} \ \
s_1  = -\frac{|h_0|+|h_1|}{\sqrt{2}}.
\end{align}
This leads to the distance $d  = \sqrt{2}(|h_0|+|h_1|)$  between the signal points,
which is larger than that of the  B-MBM with unit amplitudes by a factor of $\sqrt{2}$.
Given the channel statistics are identical in both scenarios, we can obtain \eqref{eq:BER_analog_B_MBM} by halving the SNR  in the formula for BER of the BPSK 
in \cite{Zhang}.
\end{proof}

%
%

Next we can state the following result by comparing the BERs in \eqref{eq:BER_B_MBM} and \eqref{eq:BER_analog_B_MBM}, to determine  the SNR gap between the performance of unit-amplitude to that with  the optimal amplitude and phase. The SNR gap is defined as
\begin{align}\label{eq:SNR_gap}
 \textsc{SNR}_\textsc{Gap}={\textsc{snr}_2}/{\textsc{snr}_1},
\end{align}
where  $P_{b, \text{Unit-Amp}}^{\text{(CL B-MBM)}}(\textsc{snr}_2) = P_b^{\text{(CL B-MBM)}}(\textsc{snr}_1)$. That is, the SNR difference
$[\textsc{SNR}_\textsc{Gap} ]_{dB}=[\textsc{snr}_2]_{dB} - [{\textsc{snr}_1}]_{dB}$
indicates the additional power required to sustain the optimal performance using the unit-amplitude weights.
\begin{Cor}\label{cor_gap}
The SNR gap in \eqref{eq:SNR_gap} at high SNR is given by
\begin{align}
 \textsc{SNR}_\textsc{Gap}=\frac{1}{\sqrt{0.75}}\approx   0.6  \ \ \text{dB}.
\end{align}
\end{Cor}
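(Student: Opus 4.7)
The plan is to equate the high-SNR leading-order BER approximations from Propositions 3 and 4, since the SNR gap is defined only at high SNR, and to solve for the resulting ratio of SNR values. From Proposition 3 we have the leading-order term $P_b^{\text{(CL B-MBM)}}(\snr_1) \approx \tfrac{3}{4\snr_1^2}$, and from Proposition 4 we have $P_{b, \text{Unit-Amp}}^{\text{(CL B-MBM)}}(\snr_2) \approx \tfrac{1}{\snr_2^2}$, both with residuals of order $\snr^{-3}$. The diversity orders match (both equal to 2), which is a necessary sanity check: if they differed, the gap would diverge rather than tend to a constant.

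Next I would set the leading-order terms equal, $\tfrac{3}{4\snr_1^2} = \tfrac{1}{\snr_2^2}$, and solve to obtain $\snr_2^2/\snr_1^2 = 4/3$. Taking positive square roots yields
\begin{equation}
\textsc{SNR}_\textsc{Gap} = \frac{\snr_2}{\snr_1} = \frac{2}{\sqrt{3}} = \frac{1}{\sqrt{0.75}}.
\end{equation}
Converting to decibels, $10\log_{10}(2/\sqrt{3}) = 10\log_{10} 2 - 5\log_{10} 3 \approx 3.01 - 2.39 \approx 0.6$ dB, which is the claimed value.

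The only subtle point, and the main (minor) obstacle, is justifying that the $O(\snr^{-3})$ residuals do not affect the asymptotic gap. This follows because the relative error in each BER expression vanishes as $\snr \to \infty$, so the solution of the full equation $P_{b, \text{Unit-Amp}}^{\text{(CL B-MBM)}}(\snr_2) = P_b^{\text{(CL B-MBM)}}(\snr_1)$ converges to the ratio derived from the leading terms. Hence in the high-SNR regime the gap tends to $2/\sqrt{3}$, giving the stated $0.6$ dB penalty incurred by restricting the weights to unit amplitude.
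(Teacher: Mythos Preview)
Your argument is correct and matches the paper's approach: the corollary is stated without a detailed proof, with the text immediately preceding it indicating that the result follows ``by comparing the BERs in \eqref{eq:BER_B_MBM} and \eqref{eq:BER_analog_B_MBM},'' i.e., exactly by equating the high-SNR leading terms $\tfrac{3}{4\snr_1^2}$ and $\tfrac{1}{\snr_2^2}$ as you do. Your additional remark about the $O(\snr^{-3})$ residuals being asymptotically negligible is a welcome clarification that the paper leaves implicit.
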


\section{Performance Evaluations}\label{sec:Peformance_simulation}
Table~I summarizes the results of various transmission schemes using BPSK and B-MBM. Fig.~\ref{Fig:BER_uncoded} shows the BER for BPSK and B-MBM over Rayleigh fading channel with unit variance. The analytical results consistently align with the simulations in all cases. The open-loop B-MBM exhibits   a 3 dB gap from the conventional BPSK. However incorporating feedback leads to a significant enhancement in the performance.  For BER of $5\times10^{-3}$, feedback provides nearly 10 dB power gain as compared to the open-loop case. The closed-loop B-MBM with unit-amplitude weights performs very close to the closed-loop B-MBM with optimized complex weights, with loss of just 0.6 dB, which is in agreement with the difference between the high-SNR approximations in Corollary~\ref{cor_gap}. This is encouraging for practical implementations.


\section{Conclusions}\label{sec:concl}


Demonstrating the limitations of the open-loop B-MBM, we analytically illustrated the necessity for feedback to enhance its performance. The transmitter's weights play a crucial role in enlarging the Euclidean distance of the observed constellation symbols at the receiver, enabling improved error protection. The optimal closed-loop B-MBM scheme achieves the performance benchmark established by Alamouti-coded BPSK transmission using two transmit antennas. Notably, even when employing unit-amplitude weights, the closed-loop B-MBM maintains a marginal 0.6 dB deviation from its counterpart with the optimal complex weights. The analytical exploration of higher-order MBM remains a subject for future research.


\begin{figure}
\centering
\vspace{0.10cm}
	\includegraphics[width=.5\textwidth]{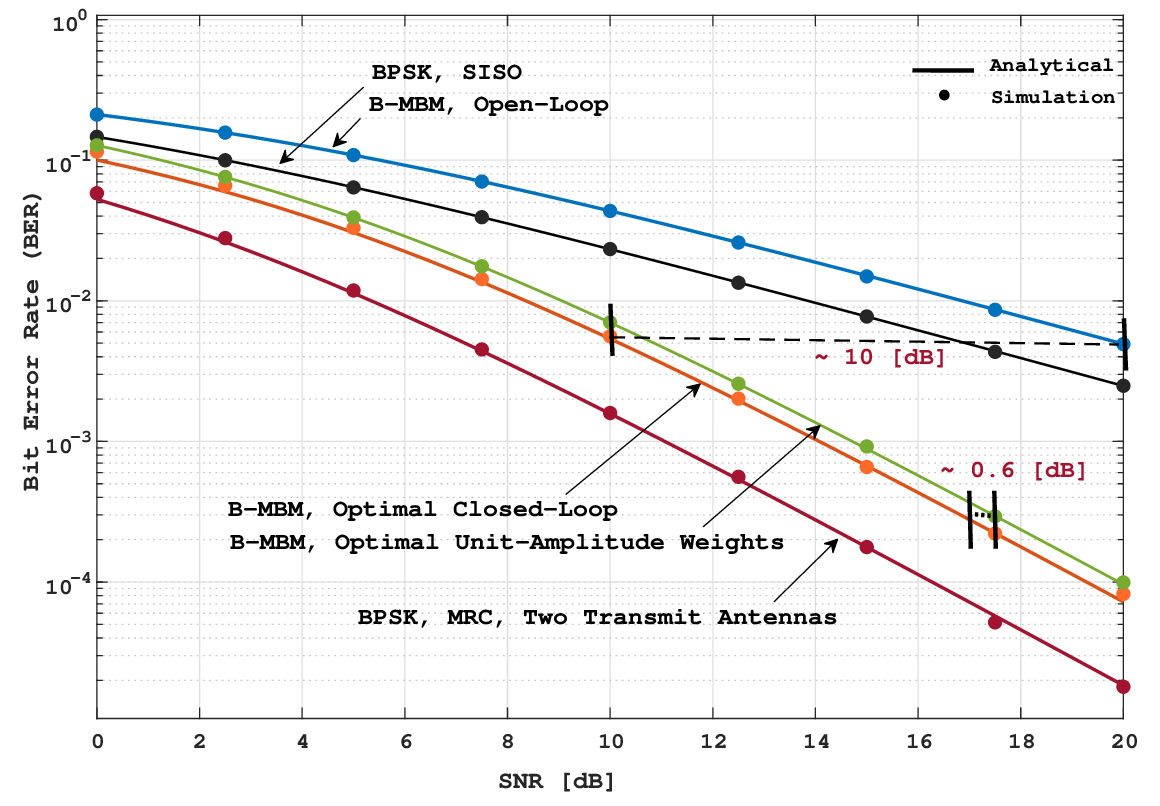}  
\vspace{-0.67cm}
	\caption{Performance of BPSK of SISO (black) and MRC (red) for $2\times 1$ link and B-MBM for open-loop (blue), closed-loop with optimal weights (orange) and closed-loop with optimal unit-amplitude weights (green) with analytical (solid lines) and simulation (dots) results over Rayleigh fading channels.}
	\label{Fig:BER_uncoded}
	\vspace{-0.45cm}
\end{figure}


\end{document}